\documentclass{article}

\usepackage{amsmath, amsthm, amssymb}
\usepackage{hyperref} 
\usepackage{cleveref}
\usepackage{apxproof}
\usepackage{xcolor} 

\newif\ifLNCS

\newif\ifLipics

\newif\ifShortVersion

\newcommand{\FFD}{$\mathsf{FFD}$ }

\newcommand{\kbp}{$k$BP }
\newcommand{\qbp}{$q$BP }

\newcommand{\ffd}{$\mathsf{FFD}$ }
\newcommand{\ffdq}{$\mathsf{FFDq}$ }
\newcommand{\ffda}{$\mathsf{FFD}({D})$ }
\newcommand{\ffdqa}{${\mathsf{FFDq}({D'}_q)}$ }
\newcommand{\opt}{$\mathsf{OPT}$ }
\newcommand{\opta}{$\mathsf{OPT}(D)$ }
\newcommand{\optqa}{$\mathsf{OPT}({D}_q)$ }

\newcommand{\dbpi}{$d_{B_P}^i$ }
\newcommand{\wdbpi}{$W(d_{B_P}^i)$ }
\newcommand{\calb}{$\mathcal{B}$ }

\newcommand{\wcalbp}{$W(\mathcal{B}_P)$ }
\newcommand{\wcalbpi}[1]{$W(#1,\mathcal{B}_P)$ }

\newtheoremrep{theorem}{Theorem}[section]
\newtheoremrep{definition}{Definition}[section]
\newtheoremrep{lemma}{Lemma}[theorem]
\newtheoremrep{proposition}{Proposition}[theorem]
\newtheoremrep{corollary}{Corollary}[lemma]
\newtheoremrep{lemmar}[theorem]{Lemma}
\newtheoremrep{example}{Example}[section]

\newcommand{\dk}[1]{\textcolor{blue}{#1}}
\newcommand{\dkb}[1]{\dk{Dinesh says: #1}}

\newcommand{\shortabstract}{
	we study the \FFD algorithm for \kbp.
}

\newcommand{\longabstract}{
	We consider the bin packing problem with constraint on the number of copies of each item and their placements. 
}

\title{On the upper bound of the generalization of \FFD  to solve \qbp  for some special cases \\[0.5em] \large \textbf{Working Paper} }

\author{Dinesh Kumar Baghel \\
\large School of Computer Science, UPES, Dehradun - 248007 \\
\texttt{dinkubag21@gmail.com}}

\begin{document}	
	\maketitle

%
%
%

	\begin{abstract}	
		We consider a variant of the bin packing problem with constraints on the number of copies of each item and their placement in the packing.
		The input 
		$D_q := DD\ldots$ is defined as $q$ consecutive copies of the multiset $D$, 
		with a fixed bin capacity $S$.
		Note that, for each item in $D$, there are $q$ copies in $D_q$.
		The goal is to pack all the items in $D_q$ into the minimum number of bins, such that each bin contains at most one copy of each item and the total size of all items in a bin does not exceed the bin capacity $S$. We call this problem $q$BP.
		
		First Fit Decreasing (\FFD) is a classical bin packing algorithm: it first orders the items in nonincreasing order, then packs the next item into the first bin where it fits. 
		In the literature, \FFD proofs rely on the assumption that the last bin in the \FFD packing contains only a single item. This assumption does not naturally extend to the $q$BP problem.
		In this paper, we circumvent this difficulty by analyzing \ffdq on a carefully chosen subinstance 
		${D'}_q \subseteq D_q$ ($q$ consecutive copies of $D$, each copy sorted in non-increasing order) while preserving the same upper bound for the original input $D_q$. 
		We show that
		the approximation ratio of 
		\ffdq for some special cases is
		\begin{align*}
			\mathsf{FFDq(D_q)} \leq \frac{11}{9}\mathsf{OPT(D_q)} + 3q
		\end{align*}
		where $\mathsf{FFDq}$ and $\mathsf{OPT}$ denote the number of bins used by the \FFD generalization and by an optimal algorithm, respectively.
		
	\end{abstract}

	\tableofcontents
	

\section{Introduction}
\label{sec: introduction}
We define \emph{$q$-times bin packing} (or \qbp in short) as: given a multiset of items of different sizes, a fixed bin capacity, and an integer $q \geq 1$. 
The \qbp problem is to pack these items into a minimum number of bins such that the sum of the sizes of the items in a bin does not exceed the bin capacity.
The packing must satify the following constraints:
\begin{enumerate}
	\item[C1]: each item appears at most once in a bin, 
	and
	
	\item[C2]: each item appears in $q$ different bins. We call these constraints as \qbp constraints.
\end{enumerate}

Formally, we consider a (multi)set $D$ of positive real numbers, a positive number $S$ that represents the bin capacity $S$, and an integer $q \geq 1$. Typically, $\sum d_i > S$ (where $d_i$ is the size of item $i$ in $D$). 
Given $q$, the goal is to pack the items in $D$ in a minimum number of bins such that the packing satisifies the \qbp constraints.

The problem \qbp has application in electricity distirbution \cite{baghel2026time, baghel2024k}.
Other application is creating a back up of files on different file servers (obviously there should not be more than one copy of the same file on the same server) \cite{Jansen_1998}.

Bin packing problem is $\mathsf{NP}$-hard \cite{GareyJohnson1979}. 
\FFD is a classical bin packing problem that works as follows:
first, the items are ordered by descending size, then in this order it attempts to pack the new item into the first bin, where it fits. In \FFD, all the bins are kept open in the order in which they are opened.

In their work, \cite{baghel2026time} modeled the electricity distribution problem as \qbp problem. They applied the generalization of \FFD  to solve \qbp as follows: items are packed using the classical \FFD algorithm while satisfying the constraint C1. 
The final packing of the input must satisy the constraints C1 and C2.
In their work, they explored the two ways in which an input can be given to the packing algorithms: \\
-- In the first way, an input contains $q$ copies of the first item, then $q$ copies of the next item, and so on. \\
-- In the second way, an input to the packing algorithm is $D_q$ which is defined as the $q$ consecutive copies of $D$ where each copy is sorted in non-increasing order.\\
They have argued that in the first way algorithm will generate the packing that contains the number of bins that is $q$ times the number of bins in the packing of $D$ when $q=1$, whereas the packing can improve if the input is considered in second way. 
Therefore, in this paper we consider the input as $D_q$ as described above. 
Also, they have given a lower bound of 
$\frac{7}{6}\mathsf{OPT}(D_q) + 1$ that is based on the example given in \cite{dosa2007tight}. They conjectured the upper bound to be 
$\frac{11}{9} \mathsf{OPT}(D_q) + \frac{6}{9}$, and conducted experiments that supports this upper bound.

Lets calls the generalization of \FFD to sovle \qbp as \ffdq. Our paper contributes in the following way:
\begin{itemize}
	\item As reported in \cite{baghel2026time}, the main hindrance in extending the upper bound proofs for \FFD lies in the asusmption that the last bin in the \FFD packing of $D$ contains only a single item. This assumption does not naturally hold for \qbp. 
	In this paper, we overcome this by analyzing \ffdq for input ${D'}_q \subseteq D_q$ without compromising the upper bound for the input $D_q$.
	
	
	\item We have extended the proof of \cite{Baker_1985}, and prove that for some simple cases (these cases depends on the size of the first item in the last bin in the \FFD packing of $D_q$) the asymptotic approximation ratio is:
	\begin{align*}
		\mathsf{FFDq(D_q)} \leq \frac{11}{9}\mathsf{OPT(D_q)} + 3q
	\end{align*}
\end{itemize}
To the best of our knowledge, we are the first to give upper bound on \ffdq.

The outline of the remaining article is as follows:
in \Cref{sec: literature} we discuss the literature relevant to \FFD.
In \Cref{sec: definitions and notations} we describe the notations and definitions that are required in the remaining article.
In \Cref{sec: ffdq upper bound special case}, we give the proof for the upper bound of the generalization for the \FFD algorithm to solve \qbp for some special cases.
In \Cref{sec: conclusion}, we conclude this article.
	\section{Literature Review}
\label{sec: literature}
In this section, we will cover the literature relevant to First Fit Decreasing Algorithm.
$
\mathsf{FFD}(D) \leq \frac{11}{9} \mathsf{OPT}(D) +4
$
, showing the asymptotic approxmation ratio of $11/9 \approx 1.222$. His proff required more than 100 pages.
Later work by Baker \cite{Baker_1985}, and Yue \cite{Minyi} improved the additive constant from $4$ to $3$, then to $1$.
Yue together with Li \cite{Li_Yue_1997} further reduced the additive constant from $1$ to $7/9$.

Dosa \cite{dosa2007tight, Dósa_Li_Han_Tuza_2013} showed that \FFD satisfies 
$
	\mathsf{FFD}(D) \leq \frac{11}{9} \mathsf{OPT}(D) + \frac{6}{9}
$
and that this inequality is tight, and settled a long standing question about the best possible additive constant.
	\section{Definitions and Notations}
\label{sec: definitions and notations}
Let $D$ be the set of items to pack. Each item in $D$ is in the range $(0,S]$ where $S$ is some natural number that denotes the bin capacity. 
We denote by $d_i$ the $i$th item in $D$.  Let \ffda and \opta denotes the \ffd and \opt packing of the input instance $D$. 
The input instance $D$ is divided into sets of pieces called as \emph{regions}. We denote by $R_i$ the $i$the region.

This paper is a generalization of the Baker paper \cite{Baker_1985}.
Let $P$ denote some packing of the items in $D$. We denote by $B_P$ a bin in packing $P$ ($B$, when the packing $P$ is clear in the context). Then, for some bin $B$ in $P$ $i$th largest item in $B$ is denoted as \dbpi. 

In this paper we are using a weighting function $W$ with domain $[x_s,S]$ where $x_s$ is the smallest item in instance $D$,and range $[0,1]$
 (it is the same function used in Baker's paper \cite{Baker_1985}). It is a nondecreasing step function. 

\begin{table}[ht]
	\centering
	\begin{tabular}{c|c|c}
		\hline
		$i$ & $R_i$ & $W_i$ \\ \hline
		$0$ & $[x_s,\frac{S}{3}]$ & $\frac{1}{3}$ \\ \hline
		$1$ & $(\frac{S}{3},\frac{S}{2}]$ & $\frac{1}{2}$ \\ \hline
		$2$ & $(\frac{S}{2},S]$ & $1$ \\ \hline
	\end{tabular}
	\caption{A weighting function}
	\label{tab:example weighting function}
\end{table}

Weighting functions are specified using a table for example \Cref{tab:example weighting function}.
The weight of an item $d_i$ is $W(d_i)$ and its \emph{density} is defined as $\frac{S \cdot W(d_i)}{d_i}$. Let $W(B_P)$ be the total weight of all items packed in a bin $B_P$. Similarly, we define \wdbpi as the weight of the $i$th item in $B_P$ (this weight is $0$ if $B_P$ do not have any $i$th item). 
Let \calb be the set of bins in $P$, we define \wcalbp as the sum of the weights of all the bins in \calb in $P$.
Define \wcalbpi{i} be the sum of the weight of all the $i$th item in the set of bins \calb in $P$.

Bins in the packing are ordered as $B_1, B_2, \ldots$ where the subscript denote the sequence in which the bins are open while packing the item in $D$ using \ffdq.

The input to the \ffdq algorithm is $D_q$ defined as $D_q = DD\ldots \text{ q}th \text{ copy of }D$, each copy of $D$ sorted in non-increasing order.
We denote by $x$ the first item in the last bin in the \ffdq packing of instance $D_q$.
	\section{\ffdq Upper Bound For Special Cases}
\label{sec: ffdq upper bound special case}

In this section, we will give a proof sketch of the main theorem of this paper. Before that we will discuss some necessary definitions that are required in this section.
\begin{definition} \label{def: large item}
	If the size of an item is $>S/2$ we call it as large item.
\end{definition}

\begin{definition} \label{def: regular item and fallback item}[Du \cite{Baker_1985}]
	An item $d_i$ is called a \emph{regular} item such that all bins numbered higher than the bin in which the item $d_i$ is packed are empty.
	Otherwise, the item is called as \emph{fallback} item.
	Clearly, $x$ is the first \emph{regular} item in the last bin in the \ffdq packing of $D_q$.
\end{definition}

\begin{definition} \label{def: regular bin and transition bin}[Due to \cite{Baker_1985}]
	A \emph{regular} bin (except the last bin or the bin with highest index) is defined as one in which all the items that are packed in it are from a single region $R_i$ (with respect to a weighting function).
	Otherwise, it is called as \emph{transition} bin.
\end{definition}

Existing upper bound proofs in the literature assumes that in an instance $D$, $x$ is the smallest item, and is also the first item in the last bin the \FFD packing of $D$.
This assumption is not directly applicable in \qbp when $q>1$. Consider the following example:
\begin{example}
	Let $D = \{30, 20, 15\}$ and $S=50$. Then, \ffdq packing (which is equivalent to \FFD packing) when $q=1$ is $\{30,20\}, \{15\}$ whereas for $q=2$ the packing is $\{30, 20\}, \{15, 30\}, \{20, 15\}$.
\end{example}
For the purpose of analyzing the upper bound,
it is clear that we need a way to get around this problem. To resolve this, for the purpose of analysis, we assume the input to \ffdq as:

\begin{definition}\label{def: D prime q}
	We define the instance ${D'}_q$ as follows:
	Let $D^\downarrow$ be the input instance sorted in non-increasing order. Then ${D'}_q$ is $DDD\ldots (q-1) \text{ copies of } D$ (each copy of $D$ sorted in non-increasing order) followed by a subset of $D^\downarrow$ that contains items from largest till item $x$.
\end{definition}


\begin{proposition} \label{prop: same ffdq packing}
	\ffdqa $= \mathsf{FFDq}(D_q)$ .
\end{proposition}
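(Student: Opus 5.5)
The plan is to read $\mathsf{FFDq}(\cdot)$ as the number of bins produced, and to use the fact that \ffdq is an \emph{online} rule once the input sequence is fixed. It scans the sequence $D_q$ left to right. It places each item into the lowest-indexed open bin that has enough residual capacity and does not already contain a copy of that item (constraint C1), and it opens a new bin only if no such bin exists. The key observation is that ${D'}_q$, as given in \Cref{def: D prime q}, is a \emph{prefix} of the sequence $D_q$. Both sequences start with the same $q-1$ sorted copies of $D$. The $q$-th copy in $D_q$ is $D^\downarrow$, and ${D'}_q$ keeps exactly the initial segment of $D^\downarrow$ ending at the occurrence of $x$.

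\textbf{Step 1 (prefix invariance).} I would prove by induction on $t$ that after processing the first $t$ items, the partial packings of ${D'}_q$ and of $D_q$ coincide bin by bin, for every $t \le |{D'}_q|$. The placement of item $t$ depends only on the current contents of the bins and on the item's size and identity. Both of these agree by the induction hypothesis, so the same bin is chosen, or a new bin with the same index is opened in both runs. Ties among equal-sized items must be broken identically in both sequences. I would fix this by treating items as labelled, so that $D^\downarrow$ is a fixed order and "the items from the largest till $x$" means the positions up to and including the specific labelled occurrence of $x$.

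\textbf{Step 2 (no new bins after $x$).} By definition, $x$ is the first item of the last bin $B_m$ of $\mathsf{FFDq}(D_q)$, so $x$ is the item that opens $B_m$. Every item of $D_q$ that comes after $x$ is packed into one of $B_1,\dots,B_m$, since otherwise a bin $B_{m+1}$ would exist. Hence, when processing reaches the end of the prefix ${D'}_q$ (which ends with $x$), exactly $m$ bins are open. By Step 1, this is also the complete packing $\mathsf{FFDq}({D'}_q)$, so both packings use $m$ bins. Moreover, $\mathsf{FFDq}({D'}_q)$ is obtained from $\mathsf{FFDq}(D_q)$ by deleting the items after $x$; this stronger form is what the later weighting argument needs.

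\textbf{Main obstacle.} The delicate point is making sure that the occurrence of $x$ that opens $B_m$ really lies in the $q$-th copy of $D$, so that ${D'}_q$ as literally defined is a prefix ending at that item. If $B_m$ were opened by an item from an earlier copy $j<q$, I would first try to rule this out. The argument would be that the $q$-th copy of that same item cannot go into $B_m$ (by C1), nor into any earlier bin, by a monotonicity argument on residual capacities. If that fails, I would reinterpret ${D'}_q$ as the prefix of $D_q$ ending at the opening occurrence of $x$. Steps 1 and 2 go through verbatim under that reading, since they use only the prefix property and the fact that $x$ opens the last bin.
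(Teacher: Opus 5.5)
Your proposal is correct and takes essentially the same approach as the paper. ${D'}_q$ is a prefix of $D_q$ ending at the item $x$ that opens the last bin, so every later item of $D_q$ goes into an already open bin and the bin count is unchanged; your prefix-invariance induction just makes the paper's one-line argument rigorous. Your first attempt at the main obstacle already succeeds, and it fills in a point the paper leaves implicit: both reasons a bin rejects an item (insufficient residual capacity, and an existing copy under C1) persist over time, so if $B_m$ were opened by an item from a copy $j<q$, a later copy of that item would force a bin $B_{m+1}$.
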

\begin{proof}
	The proof is easy. Since the remaining items, that is items in $D_q \setminus {D'}_q$ are either packed in  the previous bins or they are packed in the last bin. In both of the cases, they do not have any impact on the number bins required to pack the items in the \ffdq packing of 
	$D_q$ (or ${D'}_q$).
	Note that when $q=1$, this is the smallest piece in the original piece whereas for $q>1$ it need not to be so.
\end{proof} 

Now, we state the main theorem:
\begin{theorem} \label{thm: main theorem 11/9 bound}
	For every input instance $D_q$ and  for some natural $q \geq 1$, $\mathsf{FFDq}(D_q) = \mathsf{FFDq({D'}_q)}
	\leq \frac{11}{9}\mathsf{OPT}({D'}_q) + 3q
	\leq \frac{11}{9}\mathsf{OPT}(D_q) + 3q$.
\end{theorem}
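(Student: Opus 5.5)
The chain of (in)equalities splits into three parts, and I will handle them in order of difficulty.

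\emph{First equality.} The identity $\mathsf{FFDq}(D_q)=\mathsf{FFDq}({D'}_q)$ is exactly \Cref{prop: same ffdq packing}, so nothing new is needed.

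\emph{Last inequality.} For $\mathsf{OPT}({D'}_q)\le \mathsf{OPT}(D_q)$ I use monotonicity. ${D'}_q\subseteq D_q$, and deleting items from a feasible \qbp packing of $D_q$ keeps both constraints: C1 is preserved, and C2 (now read as ``each copy present in ${D'}_q$ lies in a distinct bin'') is also preserved. So any optimal packing of $D_q$ restricts to a feasible packing of ${D'}_q$ with no more bins.

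\emph{Middle inequality.} The core is the bound $\mathsf{FFDq}({D'}_q)\le \frac{11}{9}\mathsf{OPT}({D'}_q)+3q$, and for it I would follow Baker's weighting argument \cite{Baker_1985}. By construction of ${D'}_q$, the item $x$ is the smallest item of the last copy and is the first item of the last bin. I choose a weighting function $W$ whose breakpoints depend on the region containing $x$. The special cases are $x>S/2$, $x\in(S/3,S/2]$, $x\in(S/4,S/3]$, and so on, as in Baker. Weights are normalized so that every bin of any feasible packing, in particular every bin of $\mathsf{OPT}({D'}_q)$, has weight at most $\frac{11}{9}$. Since C1 only restricts packings further, Baker's per-bin upper bounds on $W$ hold verbatim for OPT. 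Summing over OPT bins gives $W({D'}_q)\le \frac{11}{9}\mathsf{OPT}({D'}_q)$.

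The remaining step is a lower bound on the weight of the \ffdq packing: every bin except a bounded number of exceptional bins should have weight at least $1$. This is the main obstacle. Baker's argument uses the FFD invariant that an item fallen into bin $B_j$ did not fit into any earlier bin, and that every item in later bins is at least $x$. In \ffdq an item may be rejected from an earlier bin not because of size but because of C1, since that bin already holds a copy of it. The \ffdq packing is then viewed as $q$ consecutive phases, one per copy of $D$. Within a phase items arrive in non-increasing order, so the fit-based rejection invariant holds. The exception is when the item's own earlier copy blocks the bin, and an item can be so blocked in at most $q-1$ bins.

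My plan is to charge these C1-blocked situations and the region-transition bins to the additive term. There are at most a constant number of transition bins per phase, about three as in Baker: one per region boundary plus the last bin. Aggregated over the $q$ phases, this gives the $3q$ additive constant. For the regular bins inside each phase I would replicate Baker's case analysis, using that every item placed after the bin's opening has size at least $x$ in the last phase. I expect the delicate part to be exactly showing that C1-blocking does not create extra low-weight regular bins beyond those transition bins. Putting it together, $\mathsf{FFDq}({D'}_q)-3q\le W({D'}_q)\le \frac{11}{9}\mathsf{OPT}({D'}_q)$.
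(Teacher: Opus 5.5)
Your first equality and last inequality are fine. There is a genuine gap in the middle inequality.

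You claim the weights can be normalized so that every bin of any feasible packing has weight at most $\frac{11}{9}$, and that Baker's per-bin bounds ``hold verbatim'' for OPT. They do not. Baker's per-bin bound (P2 in the paper) covers only bins with no large item ($>S/2$). With large items weighing $1$, a bin holding a large item and a second item of weight $\frac12$ already weighs $\frac32$.

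Here is a concrete failure with $q=1$ and $S=1$. Take $n$ items of size $0.6$, $n$ of size $0.36$, and one of size $0.3$.
\begin{itemize}
\item FFD puts one $0.36$ with each $0.6$ and opens the last bin for $x=0.3$.
\item So $\mathsf{FFDq}=\mathsf{OPT}=n+1$: an $n$-bin packing would need a $0.6$ in every bin, leaving room for only one of the $n+1$ smaller items per bin.
\item The case (a) weighting gives these items weights $1,\frac12,\frac13$.
\item Hence $W({D'}_q)=\frac32 n+\frac13>\frac{11}{9}(n+1)$ for $n\ge 4$.
\end{itemize}
So your final step $W({D'}_q)\le \frac{11}{9}\mathsf{OPT}({D'}_q)$ fails even for ordinary FFD.

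The missing idea is Baker's correction for $L$-bins, which the paper carries over. One works with $\overline{W}=\sum_{p\in{D'}_q}W(p)-W(2,L_P)-W(3,L_P)$, which removes the second and third items of the $L$-bins of the $\mathsf{FFDq}$ packing. This costs nothing on the $\mathsf{FFDq}$ side, since each $L$-bin keeps weight $1$ from its large item. On the OPT side, however, you then need three facts:
\begin{itemize}
\item P3: first items of $L$-bins weigh $1$.
\item P4: third items in OPT $L$-bins weigh at most $\frac29$.
\item P5, the key one: the total weight of second items in $L$-bins of $\mathsf{OPT}({D'}_q)$ is at most that of second and third items in $L$-bins of $\mathsf{FFDq}({D'}_q)$.
\end{itemize}
P5 is the paper's generalization of Baker's Lemma 1. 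It is also exactly where the $q$BP-specific difficulty lives: items of the last copy may land in the last bin of the packing of the first $q-1$ copies, and C1-blocking occurs. That is why up to $q-1$ items must be excluded. This comparison cannot be obtained from ``C1 only restricts packings'', and nothing in your plan supplies it.

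There are two secondary issues.
\begin{itemize}
\item Counting ``one transition bin per region boundary'' does not give three per phase in case (b), whose weighting has seven regions. The paper instead bounds the deficit of each of up to $4q$ low-weight bins by $\frac{9}{25}$ via the $\frac45$-density argument.
\item The paper completes only cases (a) and (b) and the ranges $x\le\frac{2}{11}S$, $x>\frac S2$, $\frac S3<x\le\frac S2$. It does not carry out Baker's cases (c) and (d), so ``replicating Baker's case analysis'' goes beyond what it establishes.
\end{itemize}
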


Other assumptions that we can made is this:
All the \emph{large} items are packed in the same bin in both \ffdqa and $\mathsf{OPT}({D'}_q)$. 
Let ${L}$ be the set of bins that contains a large item. A bin in ${L}$ is called as a $L-$bin.

Now we present a generalization of Lemma 1 (given in \cite{Baker_1985}). This lemma holds true for every weighting function.

\begin{lemma}
	\label{lem: 2 item in optqa at most 2 and 3 item in ffdqa}
	Let $x > \frac{2}{11} \cdot S$ and let $W$ be a weighting function. This weighting function assigns a weight of $0$ to all items with size less than $x$. Then, the total weight of the second items (having non-zero weight) in $L$-bins of $\mathsf{OPT}({D'}_q)$ is at most the total weight of the second and third items (of non-zero weight) in $L$-bins of \ffdqa except at most $q-1$ items with total weight at most $\frac{q-1}{2}$. 
\end{lemma}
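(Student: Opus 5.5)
We follow the proof of Lemma 1 in \cite{Baker_1985}: build an injective matching from the non-zero-weight second items in $L$-bins of $\mathsf{OPT}({D'}_q)$ into the second and third items in $L$-bins of \ffdqa, so that each optimal item is dominated in weight by its image. Under the standing assumption, every large item lies in the same $L$-bin in both packings. For an $L$-bin $B$ we therefore write $d^1_B$ for its common large item, $o_B$ for the second item of $B$ in $\mathsf{OPT}({D'}_q)$, and $f^2_B, f^3_B$ for the second and third items of $B$ in \ffdqa.

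\textbf{Step 1 (one item per bin).} Because $W$ is nondecreasing and vanishes below $x$, only items of size at least $x>\frac{2}{11}S$ count. Also $o_B\le S-d^1_B<S/2$. If \ffdqa places in $B$ a second item $f^2_B\ge o_B$, then $W(f^2_B)\ge W(o_B)$ and we match $o_B\mapsto f^2_B$. Now suppose $f^2_B<o_B$ or $B$ has no second item. When \ffdqa handled the copy of $o_B$ that the optimum places with $d^1_B$, that copy fit into $B$, since it fits there in the optimum. By first fit, the copy was therefore either placed in an earlier bin, or blocked from $B$ by constraint C1 because $B$ already held another copy of the same item. This C1 blocking is the only new phenomenon compared with $q=1$.

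\textbf{Step 2 (Baker's exchange).} In the first-fit case we reproduce Baker's argument. Consider the $L$-bins whose \ffdqa second item is smaller than their optimal second item. Their optimal second items are packed by \ffdqa into earlier $L$-bins, as second or third items. A counting argument over the $L$-bins, taken in index order, then shows that the second and third \ffdqa items in $L$-bins have at least as much total weight as the optimal second items. This argument uses the nonincreasing order within each copy of $D$, together with the fact that $x>\frac{2}{11}S$ limits a bin to fewer than $6$ items of positive weight.

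\textbf{Step 3 (main obstacle: the $q$ copies).} The difficulty is that ${D'}_q$ is $q$ sorted runs rather than one sorted list. An item of copy $j$ may therefore be blocked by C1, or may fall back into an early bin only after smaller items of copy $j-1$ have already been placed there. We would handle this by processing the copies one at a time, arguing that inside each run the Baker exchange goes through unchanged. An injection failure should occur only at a boundary between consecutive runs, where the sorted order restarts, giving at most one unmatched optimal item per boundary. There are $q-1$ boundaries, so this accounts for the $q-1$ exceptional items.

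\textbf{Step 4 (weight of the exceptions).} Each exceptional item is a second item in an $L$-bin, so it has size less than $S/2$. With the weighting table used in the paper, its weight is therefore at most $\frac12$. Summing over the $q-1$ exceptions gives total excess weight at most $\frac{q-1}{2}$, which completes the lemma.

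The step I am least sure of is Step 3. The claim that a boundary produces at most one unmatched item, rather than one per bin, has to be justified. The plan is to show that once a copy of an item is blocked by C1 in bin $B$, the later copies of items of the same or smaller size are steered past $B$ consistently, so the defect does not propagate to further bins.
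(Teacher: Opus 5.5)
Your proposal has a genuine gap, and it sits exactly where this lemma goes beyond Baker's Lemma~1. Step~3 is only a plan, as you note yourself, and the mechanism you plan to use targets the wrong phenomenon. C1 blocking can never occur in the case where you invoke it. If $B$ already held a copy of the item $o_B$, that copy would still be in $B$ at the end, forcing $f^2_B\ge o_B$, and your first rule would apply. What actually changes for $q>1$ is twofold.

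(i) The $L$-bins no longer form a prefix of the bin sequence. A large item of run $j$ either opens a bin lying after every bin created by runs $1,\ldots,j-1$, or goes into one of those bins. So the earlier bin into which $o_B$ falls back can be a non-$L$-bin, typically the last, partially filled bin of the earlier runs. Then $o_B$ is neither matched in Step~1 nor a second or third item of an $L$-bin. Your Step~2 claim that such items land in earlier $L$-bins is only valid for $q=1$.

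(ii) A large item of run $j$ can itself be placed into such a partially filled bin, which already holds items of earlier runs smaller than $o_B$. For that $B$, the copy of $o_B$ need not fit into $B$ when it is processed, even though $f^2_B<o_B$, so your Step~1 assertion fails.

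The ``steering past $B$'' repair addresses neither effect, so the count of one exception per boundary is unsupported.

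The paper's proof handles these effects by induction on $q$. It packs ${D'}_{q-1}$ first and then the last run $D'$, and it confines their interaction to the last bin of the $\mathsf{FFDq}$ packing of ${D'}_{q-1}$. Either a large item of $D'$ lands there (cases K2/S1), or an optimal second item falls back into it beside a different large item (case S2). In each case it discards the single corresponding $L$-bin of $\mathsf{OPT}({D'}_q)$, which gives one exception per step and $q-1$ overall.

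Inside a step the paper does not match bin by bin; it uses Baker's threshold count. For each $W_i$, let $d$ be the smallest item of $D'$ of size at most $\frac{S}{2}$ and weight at least $W_i$ that is not in an $L$-bin. Optimal second items of size at least $d$ are charged to the $\mathsf{FFDq}$ second item of the same bin. Smaller ones are charged to themselves; these are second or third items of $L$-bins because $\frac{S}{2}+3\cdot\frac{2}{11}S>S$. That inequality, not the ``fewer than $6$ items'' bound, is what $x>\frac{2}{11}S$ is needed for here.

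The split at $d$ keeps the two kinds of images disjoint. A per-bin rule such as ``$f^2_B$ if $f^2_B\ge o_B$, otherwise $o_B$ itself'' can use one item twice. Take $S=100$ with $\mathsf{FFDq}$ bins $\{60,40\},\{60,30\}$ and optimal bins $\{60,30\},\{60,40\}$: both optimal second items are sent to the $40$. Sorting the threshold counts (the paper's Part~2) then yields the weight inequality. Your Step~4 is fine.
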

\begin{proof}
	The proof consists of two parts. 
	In the first part of the proof, we will prove that \emph{the number of second items of weight at least $W_i$ in $L$-bins of $\mathsf{OPT}({D'}_q)$ is at most the number of second and third items of weight at least $W_i$ in $L$-bins of \ffdqa, for $0 \leq i \leq m$.}
	In the second part, we use the results from the first part and prove that the total weight of the second items in $L$-bins in $\mathsf{OPT}({D'}_q)$ is at most the total weight of the second and third items in $L$-bins of \ffdqa.
	
	\paragraph{Part 1:}
	We use mathematical induction to prove the first part. Assume that $W$ has weights $W_i$ for $0 \leq i \leq m$.
	\paragraph{Base case: } Let $q=1$, the lemma holds by the proof of lemma 1 in \cite{Baker_1985}. 
	\paragraph{Induction hypothesis:} Let us assume that the lemma holds for $q-1$.
	\paragraph{Induction case: } We will prove that the lemma holds for $q$. 
	First, we will prove that the number of second items (of non-zero weight) with weight at least $W_i$ in $L$-bins in $\mathsf{OPT}({D'}_q)$ is at most the number of second and third items (of non-zero weight) with weight at least $W_i$ in $L$-bins of \ffdqa, for $0 \leq i \leq m$. 
	The argument of our proof generalizes that given in Baker's paper \cite{Baker_1985}.
	Recall that the last instance in ${D'}_q$ is $D' \subseteq D$. \ffdq first packs the items in ${D'}_{q-1}$, then scans the bins in a first-fit way, and packs the items in $D'$ in this existing packing. 
	Now, we will discuss that at most two items with size at most $\frac{1}{2}$ and weight at least $W_i$ can be excluded from our analysis. Note that some items from $D'$ can be packed in the last bin in the \ffdq packing of ${D'}_{q-1}$. The possible cases are:
	\begin{itemize}
		\item [K1] No large item 
		from instance $D'$ has been packed in the last bin in the \ffdq packing of ${D'}_{q-1}$.
		
		\item [K2] One large item has been packed in the last bin in the \ffdq packing of ${D'}_{q-1}$, and there are no items with size at most $\frac{S}{2}$ and weight at least $W_i$ in this bin. In this case, we will exclude the corresponding bin (which contains this large item) in $\mathsf{OPT}({D'}_q)$ from our analysis. Since we are interested in the second item in $\mathsf{OPT}({D'}_q)$, doing so will exclude at most one such item from our analysis.
		If the same large item had been packed in a bin $B$ other than the last bin in the \ffdq packing of instance ${D'}_{q-1}$, then it would have contained at most two items with size at most $\frac{1}{2}$ and weight at least $W_i$.
		Note that items with size at most $\frac{S}{2}$ and weight at least $W_i$ that are packed in the last bin in the \ffdq packing of ${D'}_{q-1}$, and there is no large item packed in this bin, are analyzed in the proof itself.

	\end{itemize}
	-- In the simple case, if every item of size at most $\frac{S}{2}$ and weight at least $W_i$ from instance $D'$ is packed in a $L$-bin in \ffdqa as a second or third item, then from this and the induction hypothesis, we claim that the number of second items of weight at least $W_i$ in $L$-bins in $\mathsf{OPT}({D'}_q)$ is at most the number of second or third items in $L$-bins in \ffdqa except at most 
	$q-2 < q-1$ 
	item with total weight at most $\frac{q-2}{2} < \frac{q-1}{2}$. \newline
	-- Now, we consider a more complex case. Assume that $d$ is the smallest item (from instance $D'$) not packed in a $L$-bin in \ffdqa, and the size of $d$ is at most $\frac{S}{2}$ and weight at least $W_i$. Consider any bin $B \in P$ that contains a large item from instance $D'$. 
	If no item with size at most $\frac{S}{2}$ and weight at least $W_i$ are packed in this bin, then from our analysis we will exclude the corresponding bin $B$ in $\mathsf{OPT}({D'}_q)$. If this large item had been packed in a bin other than the last bin in the \ffdq packing of ${D'}_{q-1}$, then it would have contained at most two items with size at most $\frac{S}{2}$ and weight at least $W_i$. 
	\begin{itemize}
		\item If $B$ contains a second item of size at least $d$ in $\mathsf{OPT}({D'}_q)$, and if the large item in $B$ is not packed in the last bin in the \ffdq packing of $D_{q-1}$ then the first item in $B$ is of size at most $S-d$, and $B$ also contains a second piece of size at least $d$ in \ffdqa (otherwise, $d$ would have been packed in $B$).
		
		\item[S1] If $B$ contains a second item of size at least $d$ in $\mathsf{OPT}({D'}_q)$, and if the \ffdq packing of same item falls in case K2. Then, we will exclude such a bin from our analysis.

		\item[S2] If in $\mathsf{OPT}({D'}_q)$, bin $B$ contains a second item $c$ of weight at least $W_i$ and size less than $d$, then the following cases can occur:
		\begin{itemize}
			\item Item $c$ has been packed in the last bin in the \ffdq packing of ${D'}_{q-1}$. If there is no large item in this bin then it contradicts our assumption that $d$ is the smallest item not packed in a $L$-bin in \ffdqa.
			
			\item Item $c$ has been packed in the last bin in the 
			\ffdqa packing of ${D'}_{q-1}$, and this last bin also contains a large item other than the one contained in $B$. In such case, we will exclude this bin from our analysis.
		\end{itemize}
		Note that either S1 or S2 will occur. This suggests that we can exclude at most 
		one
		items from our analysis.
		
		\item Otherwise if $\mathsf{OPT}({D'}_q)$ contains a bin $B$ that contains a second item of size less than $d$ and weight at least $W_i$, then by the property of $d$, $c$ must have been packed in a $L$-bin in \ffdqa. Since $x > \frac{2}{11}\cdot S$, $c$ must be either the second or third piece in a $L$-bin in \ffdqa. 
	\end{itemize}
	
	From above analysis, the number of second items of weight at least $W_i$ and size at least $d$ in $\mathsf{OPT}({D'}_q)$ is at most the number of second and third pieces of weight at least $W_i$ and size $d$ in \ffdqa.
	
	\paragraph{Part 2:} Let the second and third items in the $L$-bins in the \ffdqa are arranges in non-increasing order as:
	$p_1, p_2, \ldots p_n$. Similarly, we arrange the second items in $L$-bins in the \optqa packing as:
	$p_1^*, p_2^*, \ldots, p_r^*$.
	In doing so, we exclude at most 
	$q-1$ item from the instance $D_q$ as argued in the above analysis.
	It is easy to verify the claim that for $r \leq n$ and for some $i$, $1 \leq i \leq r$, $W(p_i^*) \leq W(p_i)$ holds. If this is not true, then it implies that the number of second items of weight at least $W(p_i^*)$ is larger than the number of second and third item of weight at least $W(p_i^*)$. Contradicting with what we have proved in the first part of the proof.
	Consequently, it says that the total weight of second items in $L$-bin in $\mathsf{OPT}({D'}_q)$ is at most  the total weight of second and third items in $L$-bin in \ffdqa.
\end{proof}

The idea behind the Baker's proof is to define a weighting function $W$ such that the weight of each bin in the \ffdqa packing is at least one, and at the same time the weight of each bin the $\mathsf{OPT({D'}_q)}$ is at most $\frac{11}{9}$. 

For any weighting function $W$, Baker \cite{Baker_1985} defined $\overline{W} = \sum_{p \in {D'}_q} W(p) - W(2,{L}_P) - W(3,{L}_P)$. This weighting function satisfies the following \cite{Baker_1985}:
\begin{enumerate}
	\item[P1] $\overline{W} > \mathsf{FFDq}({D'}_q) - 3q$ (in \cite{Baker_1985}, it is defined as $\overline{W} > \mathsf{FFDq}({D'}_q) - 3$),
	
	\item[P2] the total weight in each bin $\notin {L}$ in $\mathsf{OPT}({D'}_q)$ is at most $\frac{11}{9}$,
	
	\item[P3] the weight of each first piece in a $L$-bin is $1$,
	
	\item[P4] the weight of third pieces in $L$-bins of $\mathsf{OPT}({D'}_q)$ is no greater than $\frac{2}{9}$,
	
	\item[P5] the total weight of second pieces in $L$-bins in $\mathsf{OPT}({D'}_q)$ is no greater than the total weight of second and third pieces in $L$-bins in \ffdqa.
\end{enumerate}

Collectively, S1-S5 would prove the theorem since:
\begin{align*}
	\mathsf{FFDq}({D'}_q) - 3q &< \overline{W} && \text{From P1 above}\\
	&= \sum_{p \in {D'}_q} W(p) - W(2,{L}_P) - W(3,{L}_P) && \text{From definition of $\overline{W}$} \\
	&\leq \sum_{p \in {D'}_q} W(p) - W(2,{L}_{\mathsf{OPT}({D'}_q)}) && \text{From P5 above} \\
	&\leq \frac{11}{9} \mathsf{OPT}({D'}_q) && \text{From P2,P3,P4 above}
\end{align*}

\begin{proposition} \label{prop: x value range}
	The item $x$ satisfies $\frac{2}{11} \cdot S < x \leq \frac{S}{3}$. 
\end{proposition}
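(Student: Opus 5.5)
This proposition should be read the way the corresponding statement in Baker's argument is read: it is a reduction to the hard case, not a property of every instance. Both bounds are obtained by disposing of the complementary cases with a direct counting argument that already gives $\mathsf{FFDq}({D'}_q) \leq \frac{11}{9}\mathsf{OPT}({D'}_q) + 3q$, so that in the remaining analysis we may assume $\frac{2}{11}S < x \leq \frac{S}{3}$. Since the paper advertises only \emph{some special cases}, I would state explicitly that the proposition holds under the hypothesis that \Cref{thm: main theorem 11/9 bound} is not already settled by one of these two reductions. The plan is to treat the upper bound $x \leq S/3$ and the lower bound $x > \frac{2}{11}S$ separately, in both cases using that $x$ is the first (regular) item of the last bin of \ffdqa, together with \Cref{prop: same ffdq packing}.

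\emph{Case $x > S/3$.} Every item placed before $x$ in ${D'}_q$ has size at least $x > S/3$, so every bin of \ffdqa other than the last holds at most two items. I would argue that the bins containing a large item match one-to-one with large items, and that every non-large item lies in a bin with at most two items of size greater than $S/3$. Within each of the $q$ copies, the classical argument for items greater than $S/3$ shows that \FFD is optimal up to an additive constant. Across copies, constraint C1 and the first-fit scan over existing bins can leave at most one extra partially filled bin per copy. This should give $\mathsf{FFDq}({D'}_q) \leq \mathsf{OPT}({D'}_q) + q$, which is within the claimed bound.

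\emph{Case $x \leq \frac{2}{11}S$.} Every bin of \ffdqa except the last must have load greater than $S - x \geq \frac{9}{11}S$, since otherwise $x$ would have fitted into it. The only exception is a bin that already contains another copy of the same item, where C1 forbids placement even though the capacity allows it. Such a bin contains a copy of $x$'s item from an earlier copy of $D$. I would bound the number of these exceptional bins by $q-1$, since there are only $q-1$ earlier copies of that item. Summing loads then gives $\frac{9}{11}S\,(\mathsf{FFDq}({D'}_q) - q) < \sum_{p \in {D'}_q} p \leq S \cdot \mathsf{OPT}({D'}_q)$, and hence $\mathsf{FFDq}({D'}_q) < \frac{11}{9}\mathsf{OPT}({D'}_q) + q$.

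The main obstacle is this C1 exception in the small-$x$ case. One has to verify that a bin rejecting $x$ only because it holds a copy of the same item, but with load at most $S - x$, is really limited to one bin per earlier copy. I would prove this by observing that the $q-1$ copies of that item occupy at most $q-1$ distinct bins. The bound $x \leq S/3$ case also needs care with C1, but there the exceptional bins are absorbed into the $3q$ slack.
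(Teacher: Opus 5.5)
Your case $x \le \frac{2}{11}S$ is the paper's case (A1), and it is sound. Every bin other than the last either has load greater than $S-x \ge \frac{9}{11}S$ or contains one of the $q-1$ earlier copies of $x$'s item, and summing loads gives the $+q$ bound.

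The gap is in the case $x > S/3$. Its opening claim, that every item placed before $x$ in ${D'}_q$ has size at least $x$, holds only for $q=1$. The instance ${D'}_q$ begins with $q-1$ \emph{complete} copies of $D$, so it contains every item of $D$ that is smaller than $x$. The paper flags exactly this as the reason Baker's pairing argument does not transfer. As a result, bins other than the last can hold many items. For example, take $S=100$, $D=\{45,40,10,5\}$ and $q=2$. Then $\mathsf{FFDq}$ puts the first copy of $D$ into bin~1 and the second copy into bin~2. The last bin is opened by the second copy of $45$, so $x=45>S/3$, yet in the packing of ${D'}_q=(45,40,10,5,45)$ bin~1 holds four items. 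The bound itself is not violated here, but your structural claim is.

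With that premise gone, the rest of the argument is unsupported. The ``classical argument for items greater than $S/3$'' cannot be applied copy by copy, because each of the first $q-1$ copies is an arbitrary instance. For general instances $\mathsf{FFD}$ is not optimal up to an additive constant. The hypothesis $x>S/3$ constrains only the truncated last copy, so nothing in your argument restores per-copy near-optimality. Likewise, you assert without any mechanism that C1 and fallback across copies leave ``at most one extra partially filled bin per copy''. That interaction is the whole difficulty of this case: small items from earlier copies, later-copy items falling back into earlier bins, and C1 blocking placements. Hence $\mathsf{FFDq}({D'}_q)\le\mathsf{OPT}({D'}_q)+q$ is not established.

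The paper does not attempt a bin-counting bound here. It splits off $x>S/2$. For $S/3<x\le S/2$ it uses a weighting function inside the P1--P5 framework, with these weights:
\begin{itemize}
\item $0$ for sizes up to $S/3$;
\item $\frac{1}{3}$ on $(S/3,x]$;
\item $\frac{1}{2}$ on $(x,S/2]$;
\item $1$ for large items.
\end{itemize}
The lemma on second and third items in $L$-bins gives P5, and the $3q$ slack absorbs up to three underweight transition bins per copy of $D$. To close this case you would need an argument of that kind, or a genuine structural analysis of the bins produced by the earlier copies.
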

\begin{proof}
	The proof is based on the same line as argued in \cite{Baker_1985}. Here we consider the generalization of it in the context of \qbp. We consider the following cases:
	\begin{itemize}
		\item[(A1)] If $x \leq \frac{2}{11} \cdot S$, then all bins except the $q-1$ previous bins and the last bin have bin size at least $\frac{9}{11} \cdot S$. Reason being either item $x$ did not fit or there is already a copy of item $x$ in the bin packing another $x$ in the same bin will violate \qbp constraint (Note that there $q-1$ such copies of the item $x$). This will yield
		\begin{align*} \label{eqn: eqn1}
			\mathsf{FFDq}({D'}_q) \leq \frac{11}{9}\mathsf{OPT}({D'}_q) + q
		\end{align*}
		
		\item[(A2)] Another simple case could be when $x>\frac{S}{2}$. It is easy to argue that
		\begin{align*} 
			\mathsf{FFDq}({D'}_q) = \mathsf{OPT}({D'}_q)
		\end{align*}
		
		\item[(A3)] If $\frac{S}{2} \geq x > \frac{S}{3}$, for $q=1$ \cite{Baker_1985} has proved that all items that are greater than $S-x \geq S/2$ are packed alone in bins in the packing \ffda and \opta. Also, except for possibly one item all remaining items (with sizes at most $\frac{S}{2}$) are packed in pairs. This fact led them to prove: \ffda = \opta.
		Above relation do not necessarily hold because the first $q-1$ instances in ${D'}_q$ may contain items that are smaller than $x$.\\

		Consider the following weighting scheme:
		\begin{table}[!ht]
			\centering
			\begin{tabular}{c|c|c}
				$i$ & $R_i$ & $W_i$ \\ \hline
				$3$ & $\leq \frac{S}{3}$ & $0$ \\ \hline
				$2$ & $(\frac{S}{3},x]$ & $\frac{1}{3}$ \\ \hline
				$1$ & $(x,\frac{S}{2}]$ & $\frac{S}{2}$ \\ \hline
				$0$ & $(\frac{S}{2},S]$ & $1$
			\end{tabular}
			\caption{Weighting function for Case $\frac{S}{2} \geq x > \frac{S}{3}$}
			\label{tab:weighting scheme}
		\end{table}

		All $L$ bins have a weight of $1$. All bins in $FFDq({D'}_q) \setminus L$ (including the transition bins) except the following bins have weight at least $1$. These bins are as follows:
		\begin{itemize}
			\item there is a transition bin in which the first item is from $R_1$ and other items are from $R_2$.
			\item there is a transition bin in which the first item is from $R_2$ and other items are from $R_3$.
			\item there is a transition bin in which the first item is from $R_3$ and there is at most one fallback piece either from $R_2$ or $R_1$ from the future instance of $D$ in ${D'}_q$.
		\end{itemize}
		Such transition bins exists when packing of the items is going from previous instance of $D$ to the next instance of $D$ (except from the last instance ${D'} \in {D'}_q$). There are at most $3\cdot(q-1)$ such bins. 
		Note that the total weight of the final bin can be less than $1$. Therefore, there are $3\cdot(q-1) + 3 = 3q$ such bins in the \ffdqa packing of ${D'}_q$ than can have weight less than $1$. We will get
		\begin{align*}
			\overline{W}^* \geq FFDq({D'}_q) - 3q
		\end{align*}
		There can be a transition bin from $R_0$ to $R_1$ with weight at least $1$. This proves P1.
		For P2, let's assume that $B \notin L$. The maximum weight of any piece in $B$ is $\frac{S}{2}$. If $B$ contains at most two pieces then $W^*(B) \leq 2 \cdot \frac{S}{2} = S$. If $B$ contains three pieces, $p_1 \leq p_2 \leq p_3$ then $p_3 \leq \frac{S}{3}$ otherwise $p_1 + p_2 + p_3 > 3 \cdot \frac{S}{3} = S$; the weight is at most $\frac{1}{2} + \frac{1}{2} + 0 = 1$. Consequently, P2 holds. 
		It is trivial to see that P3 holds by the virtue of $W^*$. 
		By \Cref{lem: 2 item in optqa at most 2 and 3 item in ffdqa} P5 holds.
		It is easy to see that all the third pieces will be from $R_3$ because if there will be a third piece from either $R_2$ or $R_1$ in $L$-bin then the total sum of all the items will be $> S$. Third pieces from $R_3$ will have a weight of $0$. Therefore, P4 holds. 
		In summary, all of P1-P5 holds, and we conclude that the theorem holds for this case.

		For the remaining paper we can assume that $\frac{2}{11} \cdot S < x \leq \frac{S}{3}$.
		

	\end{itemize}
\end{proof}

For all the cases that we consider in this work, all the weighting functions are a variant of the following function $W^*$.
Let $a = \lfloor S/x \rfloor$.
\begin{align*}
	W^*(z) &= 0 && \text{ if } z < x, \\
		   &= \frac{1}{a} && \text{ if }  z \in \Big[x, \frac{S}{a} \Big] \\
		   &= \frac{1 - W^*(x)}{j} && \text{ if } z \in \Big(\frac{S}{j+1}, \frac{S-x}{j}\Big], j = 2, \ldots, a-1, \\
		   &= \frac{1}{j} && \text{ if } z \in \Big( \frac{S-x}{j}, \frac{S}{j} \Big], j=2, \ldots,a-1, \\
		   &= 1 && \text{ if } z \in (S/2, S]
\end{align*}
Above weighting function is well defined and partitions $[x_s,S]$ \cite{Baker_1985}. 
This weighting function allows every regular bin that packs items with non zero weight(except possibly the last bin and the last bin in the packing of each instance of $D$ in ${D'}_q$) to have a total weight of at least $1$. 
Argument is as follows (due to \cite{Baker_1985}): \\
-- If bin $B$ contains $j$ regular items of weight $1/j$, OR \\
-- If bin $B$ contains $j$ regular items of weight $\frac{1 - W^*(x)}{j}$, then there is a possiblity of fallback item of weight at least $W^*(x)$, resulting in a total weight of at least $1$.

\noindent
Similar to \cite{Baker_1985}, there are four cases to consider:
\begin{enumerate}
	\item[(a)] $\frac{S}{4} < x \leq \frac{S}{3}$,
	\item[(b)] $\frac{2}{11} \cdot S < x \leq \frac{S}{5}$ and there is no regular piece in $\Big(\frac{(S-x)}{4},\frac{S}{4}$ \Big],
	\item[(c)] $\frac{2}{11} \cdot S < x \leq \frac{S}{5}$ and there is a regular piece in $\Big(\frac{(S-x)}{4}, \frac{S}{4} \Big]$, or
	\item[(d)] $\frac{S}{5} < x \leq \frac{S}{4}$
\end{enumerate}
In this paper we analyze cases (a) and (b) only.

\paragraph{Case (a):} $\frac{S}{4} < x \leq \frac{S}{3}$. The weighting function in this case is given by \Cref{tab: weighting function case a}:
\begin{table}[h!]
	\centering
	\begin{tabular}{c|c|c}
		$i$ & $R_i$ & $W_i$  \\ \hline
		$3$ & $(0, x)$ & $0$ \\ \hline
		$2$ & $[x, \frac{(S-x)}{2}]$ & $\frac{1}{3}$ \\ \hline
		$1$ & $(\frac{(S-x)}{2}, \frac{S}{2}]$ & $\frac{1}{2}$ \\ \hline
		$0$ & $(\frac{S}{2}, S]$ & $1$
	\end{tabular}
	\caption{Weighting function: Case (a)}
	\label{tab: weighting function case a}
\end{table}
For each of the first $q-1$ instances of ${D'}_q$ there are at most $2$ transition bins outside of $L$ in the \ffdqa packing of ${D'}_q$. For the first such transition bin the weight is less than $1$. It is possible that the total weight of the second transition bin is less than one. In all, we get at most $2(q-1)$ such transition bins from the first $q-1$ instances of $D$ in ${D'}_q$. The last instance $D'$ in ${D'}_q$ will have one transition bin (that do not contain an item from $R_0$) with weight less than $1$. The total weight of the final bin is also less than one. In all, there are at most $2(q-1) +2 = 2q$ bins with weight less than one. 
This gives:
\begin{align*}
	\overline{W}^* \geq FFDq({D'}_q) - 2q > FFDq({D'}_q) - 3q
\end{align*}
proving P1.
For P2, assume that there is a bin $B$ such that $B \notin L$. Note that the maximum weight of any piece in $B$ is $\frac{S}{2}$. If $B$ contains two items then the total weight is at most $2 \cdot \frac{S}{2} = S$. Assume that $B$ contains three items $p_1, p_2,$ and $p_3$ such that $p_1 \geq p_2 \geq p_3$. We claim that $p_2 \leq \frac{(S-x)}{2}$ because if not, then $p_1 + p_2 + p_3 > 2 \cdot \frac{(S-x)}{2} + x = S$ which is a violation of the bin capacity constraint. With this claim, the weight of the bin is at most $\frac{1}{2} + 2 \cdot \frac{1}{3} = \frac{7}{6}$, proving P2. 
It is trivial to see that P3 holds by the virtue of $W^*$.
By \Cref{lem: 2 item in optqa at most 2 and 3 item in ffdqa},
P5 holds. 
It is easy to check that $L$-bins in $\mathsf{OPT({D'}_q)}$ do not contain third item from $R_1$ or $R_2$, otherwise it will violate the bin capacity constraint. If there is an third item then it will be from $R_3$, and by virtue of $W^*$ P4 is satisfied.
In summary, all of P1-P5 holds in this case. This proves the theorem for this case.

\paragraph{Case (b): } $\frac{2}{11} \cdot S< x < \frac{S}{5}$ and there is no regular piece in $(\frac{(S-x)}{4}, \frac{S}{4}]$.
The weight function in this case is given by \Cref{tab: weighting function for case b}.
\begin{table}[h!]
	\centering
	\begin{tabular}{c|c|c}
		$i$ & $R_i$ & $W_i$ \\ \hline
		$6$ & $< x$ & $0$ \\ \hline
		$5$ & $[x,\frac{S}{4}]$ & $\frac{1}{5}$ \\ \hline
		$4$ & $(\frac{S}{4},\frac{(S-x)}{3}]$ & $\frac{4}{15}$ \\ \hline
		$3$ & $(\frac{(S-x)}{3},\frac{S}{3}]$ & $\frac{1}{3}$ \\ \hline
		$2$ & $(\frac{S}{3},\frac{(S-x)}{2}]$ & $\frac{2}{5}$ \\ \hline
		$1$ & $(\frac{(S-x)}{2},\frac{S}{2}]$ & $\frac{1}{2}$ \\ \hline
		$0$ & $(\frac{S}{2},S]$ & $1$ \\
	\end{tabular}
	\caption{Weighting function for case(b)}
	\label{tab: weighting function for case b}
\end{table}
The weighting function is same as the function given in Baker's paper \cite{Baker_1985} except the addition that assigns weight $0$ to item $< x$. We will prove that the above weighting function results in satisfying all of P1-P5 above. This weighting function assigns weight $W(z) = W^*(z)$ to item $z$ for $z > \frac{S}{4}$, $W(z) = \frac{1}{5}$ for $z \leq \frac{S}{4}$, and $W(z) = 0$ for $z < x$.
We generalize the arguments given in Baker's paper \cite{Baker_1985} for this case. The proof in this case is as follows: \newline
-- A regular bin in \ffdqa that contains pieces from $R_1, R_3$ or $R_5$ contains total weight at least $1$. \newline
-- Now, we will prove that a regular bin that contains items from $R_2$ or from $R_4$ contains a total weight of at least $1$. We will prove this using mathematical induction as follows: \newline
\textbf{Base case:} Let $q=1$. In this case, Baker \cite{Baker_1985} proved that a regular bin that contains items with sizes $>\frac{S}{4}$ has weight at least $1$. \newline
\textbf{Inductive hypothesis:} Let's assume that it is true for $q-1$. \newline
\textbf{Inductive case: } Now, we will prove that it is true for $q$ as well. Consider the packing of items of the instance $D'$ in ${D'}_q$. The only problematic regular bins are the bins that contain items from $R_2$ or $R_4$. 
Consider any bin that contains regular items from $R_2$. There are at most two regular items in that bin with weight at least $\frac{4}{5}$. The remaining space in that bin is at least $x$. Since item $x$ can not be packed into that bin it means the bin contains some item(s) with weight at least $\frac{1}{5}$ resulting in remaining space that can not accommodate item $x$. This gives the total weight of such bin at least $1$.
Same argument can be applied to any regular bin that contains items from $R_4$.
Overall, it proves that each of the regular bins with regular items $> \frac{S}{4}$ contains total weight at least $1$. \newline
-- A regular bin that contains items from $R_5$, by the definition of this case we can say that there are at least five items in the bin each with weight $\frac{1}{5}$, resulting in a total weight of at least $1$. Therefore, we can say that each regular bin \ffdqa contains weight at least $1$.\newline
-- The least density of an item is defined as its weight divided by the maximum item size for that weight, and it is at least $\frac{4}{5}$ \cite{Baker_1985}.
Except at most $q-1$ bins and the last bin, each bin is at least $\frac{4}{5}$ full. Therefore, except at most $q-1$ bins and the last bin each bin contains weight at least $\frac{4}{5} \cdot \frac{4}{5} = \frac{16}{25}$.
Since at most $4q$ transition bins and the last bin can have weight $< 1$,
\begin{align*}
	\overline{W} &\geq \mathsf{FFDq}({D'}_q) - 4q \cdot \frac{9}{25} -1 \\
	&> \mathsf{FFDq}({D'}_q) - 2q -1 \\
	&> \mathsf{FFDq}({D'}_q) - 3q
\end{align*}
Thus, P1 holds. \newline
By the definition of $W$ and \Cref{lem: 2 item in optqa at most 2 and 3 item in ffdqa}.
we say that P3 and P5 hold. 
The third pieces in a $L$-bin are either from $R_5$ or $R_6$ hence P4 holds. \newline 
Now, we will do a case analysis to show that for every bin $B \notin L$, $W(B_{P^*}) \leq \frac{71}{60}$, proving P2. \\
We assume that $B \notin L$. We assume that $B$ contains $i$ pieces $p_1, p_2, \ldots, p_i$ for $1 \leq i \leq 5$ (Note that higher values of $i$ are possible, but in that case there are some items from $R_6$, and from the definition of $W$ their contribution in weight is $0$. Hence, we focus on the cases where there are items having non-zero weights) such that $p_1 \geq p_2 \geq p_3 \geq \ldots \geq p_i$.
\begin{itemize}
	\item \textbf{Case 1,2:} In case, when $B$ has at most two items in $\mathsf{OPT({D'}_q)}$, the total weight is $\leq 2 \cdot \frac{1}{2} = 1$. 
	
	\item \textbf{Case 3:} When there are three items in $\mathsf{OPT({D'}_q)}$. Then,
	\begin{itemize}
		\item If $p_2 \leq \frac{S}{3}$, then the total weight can at most be $\frac{1}{2} + 2 \cdot \frac{1}{3} = \frac{7}{6} < \frac{11}{9}$,
		
		\item If $p_2 > \frac{S}{3}$ and $p_1 \leq \frac{(S-x)}{2}$, since $p_3 \leq \frac{S}{3}$, the total weight is at most $2 \cdot \frac{2}{5} + \frac{1}{3} = \frac{17}{15} < \frac{11}{9}$,
		
		\item If $p_2 > \frac{S}{3}$ and $p_1 > \frac{(S-x)}{2}$, since $p_3 \leq \frac{S}{3}$, if $p_3 > \frac{(S-x)}{3}$, it will violate the bin capacity constraint. Therefore, $p_3 \leq \frac{(S-x)}{3}$. This will give us a total weight of at most $\frac{1}{2} + \frac{2}{5} + \frac{4}{15} = \frac{7}{6} < \frac{11}{9}$
	\end{itemize}
	
	\item \textbf{Case 4:} The Bin $B$ has four items. The following are the analysis of further subcases:
	\begin{itemize}
		\item If $p_2 \leq \frac{S}{4}$, then the total weight of bin $B$ in $\mathsf{OPT({D'}_q)}$ is at most $\frac{1}{2} + 3 \cdot \frac{1}{5} = \frac{11}{10} < \frac{11}{9}$. 
		
		\item Assume that $p_2 > \frac{S}{3}$ and $p_1 > \frac{S}{3}$. Since the minimum value of $p_3$ and $p_4$ can be $x$, the sum of all the item sizes in $B$, $p_1 +p_2 + p_3 + p_4$ will violate the bin capacity constraint. This gives $\frac{S}{4} < p_2 \leq \frac{S}{3}$ ($p_2 \leq \frac{S}{4}$ has been covered above). 
		Now, $p_1$ has to be $\leq \frac{(S-x)}{2}$ otherwise the sum of the item sizes in $B$ of $p_1, p_2, p_3, p_4$ will violate the bin capacity constraint. 
		With this, now the claim is $p_3 \leq \frac{S}{4}$, otherwise the sum of all the items in $B$ will violate the bin capacity constraint. 
		With these constraints on the item sizes, the total weight of all the items in $B$ is at most $\frac{2}{5} + \frac{1}{3} + \frac{2}{5} = \frac{17}{15} < \frac{11}{9}$. 
		Observe that the $p_1 \leq \frac{S}{3}$, this leads to a claim that $p_3 \leq \frac{(S-x)}{3}$, otherwise it will violate the bin capacity constraint. In combination with $p_4 < \frac{S}{4}$, the total weight of bin $B$ is at most $2 \cdot \frac{1}{3} + \frac{4}{15} + \frac{1}{5} = \frac{17}{15} < \frac{11}{9}$.
	\end{itemize}
	\item \textbf{Case 5:} In this case, we can argue that $p_1 \leq \frac{S}{3}$, otherwise with the minimum item size for the pieces $p_1, p_2, p_3, p_4, p_5$, the sum of all item sizes in $B$ will violate the bin capacity constraint. Similarly, it can be argued that $p_2 \leq \frac{S}{4}$, otherwise with constraints on $p_1$, $p_2$ and considering the minimum item sizes for other three items, their sum will violate the bin capacity constraint. 
	This will give us a total weight of at most $\frac{1}{3} + 4 \cdot \frac{1}{5} = \frac{17}{15} < \frac{11}{9}$.
\end{itemize}
$B$ can not contain six items by virtue of $x$. Thus, all of P1-P5 holds in this case. This proves the theorem for this case.
	\section{Conclusion}
\label{sec: conclusion}

In this work, we considered a variant of the bin packing problem in which an item has $q$ number of copies for some positive integer $q>1$ and a bin can contain at most a single copy of an item.
First Fit Decreasing (\FFD) is a classical bin packing problem that can be generalized to solve this problem. 
We have shown that the generalization of \FFD, that is \ffdq, solves \qbp in at most 
$
	\frac{11}{9} \mathsf{OPT(D_q)} + 3q
$
where $q > 1$ for special cases. These special cases are based on the size of the first item in the last bin in the \ffdq packing of the input instance.
Out proofs are based on the proof given in \cite{Baker_1985}.

It will be quite interesting to see whether the same upper bound holds true for the remaining cases. 
Also, it is open whether the given bound is tight. Extending the work by \cite{dosa2007tight, Dósa_Li_Han_Tuza_2013} can be quite useful in this direction.

	\bibliography{refs}
	\bibliographystyle{alpha}	
\end{document}